\theoremstyle{plain}
\newtheorem{theorem}{Theorem}[section]
\newtheorem{lemma}[theorem]{Lemma}
\newtheorem{algorithm}{Algorithm}
\newtheorem{assumption}{Assumption}
\newtheorem{remark}{Remark}
\newtheorem{prop}{Proposition}[section]
\theoremstyle{definition}
\newcommand{\norm}[1]{\left\lVert#1\right\rVert}
\newcommand{\tildehat}[1]{\tilde{\widehat{#1}}}
\newcommand{\duohat}[1]{\widehat{\widehat{#1}}}
\begin{document}
\bibliographystyle{plain}
\title{Model-free Bootstrap Prediction Regions for Multivariate Time Series}
\author{Yiren Wang\footnote{Department of Mathematics, 
		Univ.~of California, San Diego;  email: yiw518@ucsd.edu}
	\and Dimitris N. Politis\footnote{Department of Mathematics and Halicio\v{g}lu Data Science Institute, Univ.~of California, San Diego;  email: dpolitis@ucsd.edu}}
\date{}

\maketitle
\begin{abstract}
In \cite{Das2017}, a model-free bootstrap(MFB) paradigm was proposed for generating prediction intervals of univariate,  (locally) stationary time series. Theoretical guarantees for this algorithm was resolved in \cite{WangPolitis2019}  under stationarity and weak dependence condition. 
Following this line of work, here we extend MFB for predictive inference under a multivariate time series setup. 
We describe two algorithms, the first one works for a particular class of time series under any fixed dimension $d$; the second one works for a more generalized class of time series under low-dimensional setting.
We justify our procedure through theoretical validity and simulation performance.
\end{abstract}\maketitle

\section{Introduction}\label{sec:multivariate_intro}
Time series forecasting(prediction) is widely applicable in many different fields where the prediction is carried out by conditioning on previous observations. 
Traditional approaches for time series prediction often rely on a parametric model assumption that can describe the explicit dependence relations of the data. 
The space of viable time series models that have been studied in the past is huge, therefore the search of good models that can describe the data requires both good understanding of domain knowledge as well as statistical expertise in order to execute model fitting and goodness-of-fit tests. 

In a recent monograph,  \cite{Politis2015} proposed the idea of {\it model-free} prediction in the context of regression and time series problems. 
The idea behind it is to find a one-to-one transform $H$ that takes the original data $\{X_t\}_{t=1}^n$ into a new sequence of data that are i.i.d.(independent and identically distributed), then the (inverse) transform along with resampling in  the i.i.d. world can be combined to construct both valid point predictors and prediction intervals. 
Following this, \cite{Das2017} proposed a model-free bootstrap(MFB) algorithm for generating prediction intervals for locally stationary time series that outperforms model-based approach. 
This algorithm was further studied in \cite{WangPolitis2019} under stationary weakly-dependent setting, where bootstrap validity was proved for both prediction interval and confidence intervals for linear statistics and spectral density. 
In this paper, we extend the model-free bootstrap algorithm of \cite{WangPolitis2019} for prediction regions of multivariate time series. We demonstrate the effectiveness of the algorithm by both showing theoretical bootstrap validity and also numerical simulations.

\section{Description of algorithm}\label{sec:multivariate_algorithm}
\subsection{Models of interest}
Let $\underline Y_t = (Y_{1,t},\cdots, Y_{d,t})^{\mathsf T}$ represent a $d$-dimensional time series. We consider $\underline Y_t$ to be from one of the following models:

\textbf{Model 1:} Let $\{f_i: \mathbb R \rightarrow \mathbb R\}_{i=1}^n$ be strictly monotone, continuous functions. $\underline W_t\in\mathbb R^d$ is a purely non-deterministic, strictly stationary Gaussian process. $\underline Y_t$ satisfies
\begin{equation}\label{model:1}
		Y_{i,t} = f_i(W_{i,t}),\quad i = 1,\cdots d.
\end{equation}
Model \ref{model:1} is nothing more than a multivariate analog of the class of time series models studied in \cite{WangPolitis2019}. Interestingly, it also has connection  with copula models for multivariate time series, which we will investigate later. Let us introduce a second class of model:

\textbf{Model 2:} Let $f_i(\cdot; \underline Y_{1:i-1,t}): \mathbb R\rightarrow \mathbb R$ be strictly monotone(increasing) functions whose parameters also depend on the previous $i-1$ dimensional entries of $\underline Y_t$: $\underline Y_{1:i-1,t}$, with the exception of $i=1$, where $f_1$ is a deterministic function. The time series $\underline Y_t$ are defined sequentially in dimensional order in the following manner:

\begin{equation}\label{model:2}
	\begin{aligned}
		& Y_{1,t} = f_1(W_{1,t});\\
		& Y_{i,t} = f_i(W_{i,t}; \underline Y_{1:i-1,t}), i\geq 2.
	\end{aligned}
\end{equation}
Model 2 is  more complicated than model 1 in that we allow for the parameters of subsequent transfer functions $f_i, i\geq 2$ to be dependent on previous entries of $\underline Y_t$. Model 1 can be recovered by setting $f_i(\cdot; \underline Y_{1:i-1,t}) = f_i$.

Because of the monotonicity of transfer functions,  the map $\underline W_t \rightarrow \underline Y_t$ is invertible for both models, thus
the information set generated by $\underline Y_t$ is equivalent to the one generated by $\underline W_t$, i.e., $$\mathcal F_t = \sigma(\underline Y_s, s\leq t) = \sigma(\underline W_s, s\leq t).$$

\subsection{Connection to  copula-based models}\label{sec:multivariate_copularelationship}
Another active line of research for forecasting multivariate time series involves utilizing a copula representation. By Skyler's theorem, an arbitrary time series $\underline Y_t$ can be fully described by the conditional distributions of $Y_{i,t}|\mathcal F_{t-1} \sim F_i(|\mathcal F_{t-1})$, and a conditional copula function $ C(\boldsymbol\cdot|\mathcal F_{t-1}):[0,1]^d \rightarrow \mathbb  R$ such that 
$\forall \underline y = (y_1,\cdots,y_d)^{\mathsf T}$,
\begin{equation*}
  F(\underline y|\mathcal F_{t-1}) =  C(F_1(y_1|\mathcal F_{t-1}),\cdots,F_d(y_d|\mathcal F_{t-1})|\mathcal F_{t-1} ).
\end{equation*}
The copula representation offers an approach to decorrelate spatial dependence and serial dependence. Note that in the econometrics literature, the conditional distribution of each dimension are often assumed to be fixed, i.e., $F_i(|\mathcal F_{t-1}) = F_i$, to guarantee effectiveness of valid statistical procedures, see \cite{PATTON2013899}.
One of the most famous copulas is the Gaussian copula: a centered $d-$dimensional  Gaussian random vector $\underline Z $ with correlation matrix $\boldsymbol \Sigma$ has distribution function
\begin{equation}\label{eq:copula}
\Phi_{\boldsymbol\Sigma,d}(\underline z) =  C_{\boldsymbol \Sigma}(\underline u),
\end{equation}
where $\underline u = (u_1, \cdots, u_d)^{\mathsf T}$with $u_i = \Phi(z_i)$; and $ C_{\boldsymbol\Sigma}(\underline u) = \Phi_{\boldsymbol \Sigma,d}(\Phi^{-1}(u_1),\cdots,\Phi^{-1}(u_d))$ is the Gaussian copula with correlation matrix $\boldsymbol \Sigma$.
Generalizing this to the case of a stationary Gaussian processes denoted by $\underline Z_t$, since the conditional distribution of $\underline Z_t|\mathcal F_{t-1}$ is multivariate normal, it can be fully represented by the Gaussian copula \eqref{eq:copula} above, where $\boldsymbol\Sigma = \boldsymbol \Sigma_{t|t-1}$ will be the correlation matrix of the conditional normal distribution. Thus $ C(|\mathcal F_{t-1}) =  C_{\boldsymbol{\Sigma}_{t|t-1}}$.

Another well known fact is that the copula of a joint distribution stays unchanged under monotone function transformations of each marignal random variable. 
To put it under the setup of model \eqref{model:1} and assume that $F_i(|\mathcal F_{t-1}) = F_i$, $$ F(\underline y|\mathcal F_{t-1}) =  C_{\boldsymbol \Sigma}(F_1(y_1),\cdots,F_d(y_d) ),$$ where $ C_{\boldsymbol \Sigma}$ is the conditional Gaussian copula of $\underline W_t|\mathcal F_{t-1}$.
Thus, the time series $\underline Y_t$ of model \ref{model:1} inherits the same conditional copula as the Gaussian process $\underline W_t$. $\underline Y_t$ is also referred to as the Gaussian copula process, see \cite{NIPS2010_fc8001f8}.
However, this does not apply to the  case of model \ref{model:2} where the transfer function parameters depends on other indices of $\underline Y_t$. Therefore, model \ref{model:2} can be more complicated in that the conditional copula function can go beyond the Gaussian copula.

Notably, the model-free bootstrap to be introduced in the next section shares the same spirit of the semi-parametric method (cf \cite{PATTON2013899} for more details) for copula process prediction. To elaborate, the model-free bootstrap under model \eqref{model:1} requires both estimation of the marginal CDFs via nonparametric method, as well as consistent estimation for the autocovariance structure of the underlying Gaussian process. Similarly, the semi-parametric approach in \cite{PATTON2013899} assumes general distributions for the CDF which are estimated via nonparametric methods, while the copula function is assumed to belong to a particular parametric family(e.g, Gaussian), and then estimation for the copula is carried out parametrically. 
\subsection{The model-free bootstrap algorithm}
The model-free bootstrap of \cite{Politis2015} offers a general principle for conducting bootstrap resampling procedure for parameter inference and predictive inference under regression or time series setups. 
In the univariate time series setting, the algorithm relies on an invertible transform $H_n$ between a length $n$ time series data $\{X_t\}_{t=1}^n$, and a set of $n$ i.i.d. random variables $\{\xi_t\}_{t=1}^n$, where $H_n$ is sample size adaptive. 
For parameter inference, let $\theta$ be the parameter of interest which is estimated by some statistic $\widehat\theta_n$. 
The model-free bootstrap is first performed in the space of i.i.d. random variables to generate $\{\xi_t^{*}\}_{t=1}^n$, and then we use the $H_n^{-1}$ to get bootstrap samples $\{X_t^{*}\}_{t=1}^n$. 
Then the distribution of $\widehat\theta_n - \theta$ can be approximated by $\widehat\theta_n^* - \widehat\theta_n$, where $\widehat\theta_n^*$ is calculated based on the bootstrap sample.

Predictive inference is in a more difficult situation, as valid predictions should be carried out conditioning on the entire  observed series $\{X_t\}_{t=1}^n$.
First of all, we need a notion called the predictive root, an analog to the pivot random variable $\widehat\theta_n - \theta$ in previous setting, which is defined as $$ r_{n+1} = X_{n+1} - \widehat X_{n+1},$$
where $X_{n+1}\sim F_{n+1|n}$ is the next unobserved data, and $\widehat X_{n+1}$ is a sample point predictor for the future observation, 
such that 
$$\widehat X_{n+1} = \arg\min_{x\in\mathbb R} \mathbb E_{X_{n+1}\sim \widehat F_{n+1|n}}\mathcal L(x,X_{n+1}).$$
Here $\widehat F_{n+1|n}$ is the estimated conditional CDF based on the estimated transform $\widehat H_n$. 
Examples for the loss function $\mathcal L$ include the $L^1$ and $L^2$
loss minimizers, for which we have $L^1$/$L^2$-optimal predictors, respectively. Essentially, $\widehat X_{n+1}$ is the sample estimator for the actual $1$-step ahead predictor:
$$\mathcal P(X_{n+1}|\mathcal F_n) =  \arg\min_{x\in\mathbb R} \mathbb E_{X_{n+1}\sim  F_{n+1|n}}\mathcal L(x,X_{n+1}).$$

By approximating the conditional distribution of $r_{n+1}$ via bootstrap, we can construct a two-sided  prediction interval of size $1-\alpha$ around the point predictor as follows: $$\left( \widehat X_{n+1} +  L_{\alpha/2}^*,\widehat X_{n+1} + R_{\alpha/2}^*\right),$$
where $ L_{\alpha/2}^*$ and $ R_{\alpha/2}^*$ are the lower/higher $\alpha/2$-quantiles for the distribution of $r^*_{n+1}$, the bootstrap version of $r_{n+1}$.

The foremost problem is how to construct the invertible transform $H_n$ towards i.i.d.-ness. As \cite{Politis2015} points out, such an invertible transform always exists. However, $H_n$ needs to be estimated based on data, and further structural assumptions both simplifies the estimation procedure and also guarantees certain level of efficiency required for the bootstrap to be valid. \cite{Das2017} proposed the following $H_n$ based on the probability integral transform(PIT) under the structural assumption that $X_t$ is a monotone transform of a stationary Gaussian process: 
\begin{enumerate}
	\item Let $F_X$ be the CDF of $X_t$; Let $U_t = F_X(X_t)$.
	\item Let $Z_t = \Phi^{-1}(U_t)$. \cite{Das2017} showed that $Z_t$ is a Gaussian process. Let $\underline Z_n = (Z_1, \cdots, Z_n)^{\mathsf T}$
	\item Let $\boldsymbol\Xi_n$ be the $n$-dimensional autocovariance matrix of $Z_t$, and $\boldsymbol\Xi_n^{1/2}$ the upper Cholesky decomposition matrix of $\boldsymbol\Xi$. Then $\underline \xi_n = \boldsymbol\Xi_n^{-1/2}\underline Z_n$ consists of $n$ i.i.d. standard normal random variables.
\end{enumerate}
$H_n^{-1}$ will map $n$ i.i.d. normal random variables $\underline \xi_n$ back to $\{X_t\}_{t=1}^n$ in the following way:
\begin{enumerate}
	\item Let $\underline Z_n = \boldsymbol\Xi_n^{1/2} \underline \xi_n$;
	\item let $U_t = \Phi(Z_t)$, $X_t = F_X^{-1}(U_t)$.
\end{enumerate}

\cite{WangPolitis2019} showed that under certain weakly dependence assumption of $X_t$, one can efficiently estimate the CDF $F_X$ with a nonparametric estimator. By using an augmented version of $\Phi$ in the transform, $\boldsymbol\Xi_n$ and its inverse can also be efficiently estimated, thereby showing consistency of the estimated transforms $\widehat H_n$ and $\widehat H_{n}^{-1}$, and also validity of the model-free bootstrap.

To create the predictive root $r_{n+1}^*$, both $X_{n+1}$ and $\widehat X_{n+1}$ need to be resampled separately. To sample $\widehat X_{n+1}$, with $\underline \xi_n^*$ being sampled from $\underline \xi_n$, we can use the above $
\widehat H_n^{-1}$ to get the bootstrap samples $\underline X_n^*$, which is used to re-estimate $\widehat H_n$ to get $\widehat F^*_{n+1|n}$. Then 
$$\widehat X_{n+1}^* = \arg\min_{x\in\mathbb R} \mathbb E_{X_{+1}\sim \widehat F^*_{n+1|n}}\mathcal L(x,X_{n+1}).$$
As for $X_{n+1}^*$, we first extend $\widehat H_n^{-1}$ to $\widehat H_{n+1}^{-1}$, which is then used to map the vector $(\underline\xi_n,\xi_{n+1}^*)$ back to $(\underline X_n, X_{n+1}^*)$. It is easy to see that by doing this, $X_{n+1}^*\sim \widehat F_{n+1|n}$.

We next extend the MFB to multivariate time series under model \ref{model:1} or \ref{model:2} following a similar route as above. Specifically, we first transform $\underline Y_t$ to a centered multivariate Gaussian process $\underline Z_t$ using the PIT, and then further whiten it to get i.i.d. normal vectors by  decorrelating $\underline Z_t$ with its covariance structure. The inverse of the above transforms are used to construct the transform bootstrap samples from the i.i.d. space back to the space of time series.

In order to whiten the Gaussian process $\underline Z_t$ which has $n\times d$ observations, we need the following technique adapted from \cite{Jentsch}. 
The entire sequence $\mathbf Z = [\underline Z_1,\cdots, \underline Z_n]$ can be flattened by stacking the observations into one row vector: $\underline Z_{dn} = vec(\mathbf Z) = [\underline Z_1^{\mathsf T},\cdots,\underline Z_n^{\mathsf T}]$. $\underline Z_{dn}$ is a multivariate normal vector, whose covariance matrix is symmetric block Toeplitz, and has the following form:

\begin{equation}
	\boldsymbol \Gamma_{dn} = 
	\begin{bmatrix}
		\boldsymbol \Gamma_{0} & \boldsymbol \Gamma_{1} &\cdots &\boldsymbol \Gamma_{n-1}\\
		\boldsymbol \Gamma_{1}^{\mathsf T} & \boldsymbol \Gamma_{0} &\cdots &\boldsymbol \Gamma_{n-2}\\
		 \vdots & \ddots &\ddots&\vdots\\
		\boldsymbol \Gamma_{n-1}^{\mathsf T} & \boldsymbol \Gamma_{n-2}^{\mathsf T} &\cdots &\boldsymbol \Gamma_{0}\\
	\end{bmatrix}
\end{equation}
where $\boldsymbol \Gamma_h = Cov(\underline Z_0, \underline Z_h) = \mathbb E \underline Z_{0} \underline Z_h^{\mathsf T}$ is the $d\times d$ lag-$h$ autocovariance matrix of $\underline Z_t$. 
Next, $\underline Z_{dn}$ can be whitened through left-multiplying $\boldsymbol \Gamma_{dn}^{-1/2}$, which results in i.i.d. standard normal variables.

To consistently estimate $\boldsymbol \Gamma_{dn}$, we require to use the flat-top estimator introduced in \cite{mcmurrypolitis2010}. 
The autocovariance matrix at lag $h>0$ can be estimated through the usual estimator
$$\widehat{\boldsymbol{\Gamma}}_h = \frac{1}{n} \sum_{t=1}^{n-h} \underline Z_t\underline Z_{t+h}^{\mathsf T}.$$
While for $h<0$ we can use $\widehat{\boldsymbol\Gamma}_h = \widehat{\boldsymbol\Gamma}_{|h|}^{\mathsf T}$.
Let $\kappa_l$ be the flat-top kernel with base function $\kappa$ and bandwidth parameter $l$, such that $\kappa_l(x) = \kappa(x/l)$. The new estimator is defined as 
\begin{equation}\label{eq:taperedestimator}
	\widehat{\boldsymbol\Gamma}_{\kappa, l} = \left( \kappa_l(i-j)\widehat{\boldsymbol \Gamma}_{|i-j|} \right)_{1\leq i,j\leq n},
\end{equation}
i.e., the autocovariance matrices at large lags are shrunk towards $\mathbf 0$. It is well known that with appropriate rate of divergence for $l$ as $n\rightarrow\infty$, the flat-top estimator is consistent to $\boldsymbol \Gamma_{dn}$ in operator norm:
$$\norm{\widehat{\boldsymbol\Gamma}_{\kappa, l}  - \boldsymbol\Gamma_{dn}}_{op}\overset{P}{\rightarrow} 0.$$
A common tapering function is the following trapezoid function:
\begin{equation}\label{eq:taper}
\kappa(x) = \begin{cases}
	1, & |x|\leq 1\\
	2 - |x|, & 1<|x|\leq 2\\
	0, & \text{otherwise}
\end{cases}.
\end{equation}

We also extend the concept of (1-step ahead) predictive root to the multivariate scenario. Let $\underline R_{n+1} = \underline Y_{n+1} - \underline{\widehat Y}_{n+1}$, where $\underline Y_{n+1}$ is the future observation conditioning on $\{\underline Y_t\}_{t=1}^n$; and $\underline{\widehat Y}_{n+1}$ is the $1$-step ahead predictor that satisfies 
\begin{equation}\label{eq:predictor}
	\widehat{\underline Y}_{n+1} = \arg\min_{\underline y\in\mathbb R^d} \mathbb E_{\underline Y_{n+1}\sim \widehat{F}_{n+1|n}}\mathcal L(\underline y, \underline Y_{n+1}),
\end{equation}
where $\widehat{F}_{n+1|n}$ is the estimated conditional CDF. Similarly, $\widehat{\underline Y}_{n+1}$ is an estimator for the true $1$-step ahead predictor 

$$\mathcal P(\underline Y_{n+1}|\mathcal F_n) = \arg\min_{\underline y\in\mathbb R^d} \mathbb E_{\underline Y_{n+1}\sim {F}_{n+1|n}}\mathcal L(\underline y, \underline Y_{n+1}).$$

\begin{algorithm}\label{alg:1}Bootstrap algorithm for $1-$step ahead prediction region under model  \ref{model:1}.

\begin{enumerate}
	\item For each dimension $i$, estimate the marginal CDFs $F_i$ via empirical or the nonparametric CDF estimator, denoted by $\widehat F_i$.
	\item Let $\widehat U_{i,t} = \widehat F_i(Y_{i,t})$; let $\tilde\Phi_c^{-1}$ be the quantile function of a thresholded normal distribution(see \cite{WangPolitis2019} for further details). Let $\tildehat{Z}_{i,t} = \tilde\Phi_c^{-1}(U_{i,t})$ which are estimations for the destination Gaussian process. Estimate the covariance structure of $\tildehat{\underline Z}_{dn}$, denoted by $\duohat{\boldsymbol\Gamma}_{dn}$, with the tapered covariance matrix estimator \eqref{eq:taperedestimator}. 
	\item  
	Flatten $\tildehat{\mathbf Z} = [\tildehat{\underline Z}_1,\cdots,\tildehat{\underline Z}_n]$ as $\tildehat{\underline Z}_{dn} = vec(\tildehat{\mathbf Z}) = [\tildehat{\underline Z}_1^{\mathsf T}:\cdots:\tildehat{\underline Z}_n^{\mathsf T} ]^{\mathsf T}$. Let $\widehat{\underline\xi}_{dn} = \duohat{\boldsymbol\Gamma}_{dn}^{-1/2} \tildehat{\underline Z}_{dn}.$ Also, based on the estimations in previous steps, calculate $\widehat{\underline Y}_{n+1} $ by equation \eqref{eq:predictor}.
	\item (bootstrap) 
	\begin{enumerate}[label=(\alph*)]
		\item Let ${\underline\xi}_{dn}^*$ be a vector of i.i.d. random variables uniformly sampled with replacement from the entries of $\widehat{\underline\xi}_{dn}$.
		Let ${\underline Z}_{dn}^* = \duohat{\boldsymbol\Gamma}_{dn}^{1/2}{\underline\xi}_{dn}^*$, based on which we get $\underline Z_t^*$, $t = 1,\cdots,n$. Then $Y_{i,t}^* = \widehat F_i^{-1}\left(\Phi(Z_{i,t}^*)\right)$. Use $\mathbf Y_n^* = [\underline Y_1^*,\cdots, \underline Y_n^*]$ to re-estimate the transforms above to get $\widehat F_{n+1|n}^*$, then get $\widehat{\underline Y}_{n+1}^*$ by equation  \eqref{eq:predictor}.
		\item Let $\underline\xi^*_d= (\xi_{dn+1}^*,\cdots,\xi_{d(n+1)}^*)$, with indices uniformly sampled from $\widehat{\underline\xi}_{dn}$, and $\underline\xi^*_{d(n+1)} = (\widehat{\underline\xi}_{dn}^{\mathsf T}, \underline\xi^*_d)^{\mathsf T}$.
		Then $\duohat{\boldsymbol \Gamma}_{d(n+1)}^{1/2}\underline\xi^*_{d(n+1)} : = (\tildehat{\underline Z}_{dn}^{\mathsf T}, \underline Z_{n+1}^*)$, where $\underline Z_{n+1}^*$ is the bootstrap sample for the $1$-step ahead future observation for the Gaussian process $\underline Z_t$. Let $\underline Y_{i,n+1}^* = \widehat F_i^{-1}(\Phi(Z_{i,n+1}^*))$ for $i = 1,\cdots,d$.
		\item Let $\underline R_{n+1}^* = \underline Y_{n+1}^* - \underline{\widehat Y}_{n+1}^*$.
	\end{enumerate}
\item Use step 4 to bootstrap $\underline R_{n+1}^*$ $B$ times. Let $r^{*(b)}_p = \norm{\underline R_{n+1}^{*(b)}}_p$ and $q_{\alpha}^*$ the upper $\alpha$-quantile for $\{r^{*(b)}_p\}_{b=1}^B$. The  $L^p$-norm based $1-\alpha$ prediction region for $\underline Y_{n+1}$ is 
$$\{\underline y\in\mathbb R^d: \norm{\underline y - \widehat{\underline Y}_{n+1}}_p \leq q_{\alpha}^*\}.$$
\end{enumerate}
\end{algorithm}
\begin{remark}(Choice of $L_p$-norm)
	Different $p$ values will affect the shape of the prediction region. Some common choices for $p$ include the $p = 1$, $p = 2$, and $p = \infty$. For example, using $p=2$ will produce a $d-$dimensional ball; while using $p = \infty$ will produce a $d-$dimensional rectangle. 
\end{remark}
\begin{remark}(Prediction region based on studentized root)
Step 5 of algorithm \ref{alg:1} can be augmented to produce predictive region based on studentized predictive root. Let $\widehat{\mathbf V}_n$ be the estimated covariance matrix of $\underline R_{n+1}$, and $\widehat{\mathbf V}_n^*$ the estimated covariance matrix of $\underline R_{n+1}^*$, then the studentized root $\widehat{\mathbf V}_n^{*-1/2}\underline R_{n+1}^*$ can be used to replace 4(c). The corresponding prediction region is then 
$$\{\underline y\in\mathbb R^d: \norm{\widehat{\mathbf V}_n^{-1/2}\left(\underline y - \widehat{\underline Y}_{n+1}\right)}_p \leq {q'_{\alpha}}^{*}\},$$
where ${q'_{\alpha}}^{*}$ is the analog of $q_{\alpha}^*$ for the studentized root.
\end{remark}
\begin{remark}(Limit model-free bootstrap)
	The entries of ${\underline \xi}_{dn}^*$ and ${\underline \xi}_{d}^*$ can also be sampled from $\mathcal N(0,1)$, which will be the limiting distribution for the entries of $\widehat{\underline \xi}_{dn}$. This is called the limit model-free bootstrap in \cite{Politis2015} and has superior performance under certain scenarios.
\end{remark}
\begin{remark}(Generalization to $h$-step ahead prediction region)
	The above bootstrap algorithm can be generalized to replicate the $h$-step ahead predictive root
	$$\underline R_{n+h} = \underline Y_{n+h} - \widehat Y_{n+h},$$
	based on which $h$-step ahead prediction region can be constructed as well. This is useful for the next section, where we propose a new approach for constructing joint prediction bands for univariate time series. 
\end{remark}
\begin{remark}(MFB with fixed predictor)
	Another variant to bootstrap $\underline R_{n+1}$ is to generate 
	$$\underline{\widetilde{R}}_{n+1}^* = \underline Y_{n+1}^* - \widehat{ \underline Y}_{n+1},$$
	for faster execution time, more stability, and (possibly) better performance. We compare the bootstrap performance of both procedures in Section \ref{sec:numerical}.
\end{remark}

Algorithm \ref{alg:1} will work under model assumption \eqref{model:1}. In order to perform bootstrap for the more complicated model \eqref{model:2}, certain changes need to be applied to the above algorithm detailed below:

\begin{algorithm}\label{alg:2} Bootstrap algorithm under model \ref{model:2}:
	
	Replace each $F_i$, $F_i^{-1}$ by $ F_{i,i-1}(\cdot) = \mathbb P(Y_{i,t}\leq \cdot|\underline Y_{1:i-1,t})$ and its inverse.  Possible estimator choices for $F_{i,i-1}(\cdot)$ include the nonparametric conditional CDF estimator, and estimator based on quantile/distributional regression.
\end{algorithm}

\subsection{Generating joint prediction band for univariate time series}\label{sec:multivariate_JPB_for_univariate}
Apart from producing prediction regions for multivariate time series, the above algorithm can also be used under a univariate time series setup. 
For this section, consider $\{Y_t\}_{t=1}^n$ to be realizations from model \ref{model:1} with $d = 1$. 
The problem of interest now is to generate a joint prediction band(JPB) for observations from time $n+1$ up until $n+h$.

First of all, the algorithm described in \cite{WangPolitis2019} can be used to generate  prediction intervals(PI) of level $1-\alpha$ for observations up to $h-$step ahead, $\{\widehat C_{1-\alpha}(j)\}_{j=n+1}^{n+h}$, such that asymptotic validity holds for all the PIs: as $n\rightarrow\infty$,
$$\sup_{n+1 \leq j \leq n+h}\mathbb \lvert \mathbb P\left(Y_{t+j}\in \widehat C_{1-\alpha}(j)| \{Y_t\}_{t=1}^n\right)- (1-\alpha)\rvert\rightarrow 0.$$
 A straightforward solution to adapting towards a JPB $\widehat{\mathbf C}_{n+1:n+h}$ is through the Bonferroni correction: let 
 
 $$\widehat{\mathbf C}_{n+1:n+h}^{(Bon)} = \bigtimes_{j=n+1}^{n+h} \widehat C_{1-\alpha/h}(j)$$
 Then by simple union bounds, for large enough $n$,
 \begin{equation}\label{eq:bonferroni}
 	\mathbb P\left(\underline Y_{n+1:n+h}\in\widehat{\mathbf C}_{n+1:n+h}^{(Bon)}| \{Y_t\}_{t=1}^n\right) \geq 1-\alpha.
 \end{equation}
However, equation \eqref{eq:bonferroni} does not guarantee an exact $1-\alpha$ coverage; it is also well recognized that  Bonferroni correction is a conservative method that produces prediction regions with coverage much larger than nominal levels, particularly so when the prediction intervals $\widehat C_{1-\alpha/h}(j)$ are correlated, which is indeed the case for this time series setting.

A different perspective for generating valid JPBs is through controlling errors arising from multiple testing, such as family-wise errors(FWE) or false discovery rate. For example, \cite{bootstrapJPB} proposed general bootstrap methods for producing JPBs with guaranteed control for {\it k-FWE}: the error associated with false coverage for at least $k$ future observations. 

Here, we offer a new approach for generating JPB with guaranteed coverage level utilizing the model-free bootstrap algorithm described above, through {\it stacking} the univariate time series. With a slight abuse of notation, for this section, let  
$$\underline Y_{t} = (Y_{t-h+1},\cdots,Y_{t})^{\mathsf T},\, t = h,\cdots,n.$$
Then $\{\underline Y_{t}\}_{t=h}^{n}$ are  current observations stacked into vectors of dimension $h$, and the $h$-step ahead data vector $\underline Y_{n+h} = (Y_{n+1},\cdots,Y_{n+h})^{\mathsf T}$ consists of the next $h$ future observations. Since $Y_t$ follows model \ref{model:1} with $d=1$, the stacked version $\underline Y_t$ also satisfies model \ref{model:1} with $d = h$. This facilitates using algorithm \ref{alg:1} to construct a $1-\alpha$ prediction region for $\underline Y_{n+h}$, which is also the JPB for the next $h$ observations of $Y_t$.

\section{Theoretical Results}
In this section, we prove bootstrap prediction region validity of algorithm \ref{alg:1} by using the same technique of \cite{WangPolitis2019}. 
The proof  mostly follows from the proof of Theorem 5.1 in \cite{WangPolitis2019}.  
The main challenge is to show 
$$\norm{\duohat{\boldsymbol\Gamma}_{dn} - {\boldsymbol\Gamma}_{dn} }_{op}\overset{P}{\rightarrow} 0,$$
under certain assumptions. The additional difference under multivariate setting is that ${\boldsymbol\Gamma}_{dn}$ is no longer a Toeplitz matrix, as is the case in \cite{WangPolitis2019}, but rather a block Toeplitz matrix.

We list the following general assumptions:
\begin{assumption}
	\vspace{0.3cm}
	(A1). $\underline Y_t \in \mathbb R^d$ follows model \eqref{model:1}, with each $f_i$ continuously differentiable and strictly monotone.
	\vspace{0.1cm}
	
	(A2). For all $1\leq i\leq d$, the estimator $\widehat F_i$ satisfies a uniform consistency condition with $\mathcal O_p(1/\sqrt{n})$ rate:
	\begin{equation}
		\sup_{y\in\mathbb R} |\widehat F_i(y) - F_i(y)| = \mathcal O_p(1/\sqrt{n}).
	\end{equation}
\vspace{0.1cm}

	(A3).  $\sum_{h=0}^\infty \lvert{\boldsymbol\Gamma_h}\rvert_1 <\infty.$
\vspace{0.1cm}
	
	(A4). $\exists M>0$, such that for all $i, j = 1,\cdots, d$, and all $ |h|<n$,
	
	$$\mathbb E \lvert \sum_{t=1}^{n-|h|} Z_{i,t}Z_{j,t+|h|} - n\boldsymbol\Gamma_{|h|}(i,j)\rvert_1\leq M\sqrt{n}.$$
\vspace{0.1cm}
	
	(A5). $\exists \lambda_0, \epsilon_0 >0, n_0\in\mathbb N$, such that the eigenvalues of $\boldsymbol \Gamma_{dn}$ are uniformly bounded and bounded away from $0$ for all $n>n_0$, i.e.
	$$\lambda_0\geq \lambda_{max} (\boldsymbol \Gamma_{dn})\geq\lambda_{min} (\boldsymbol \Gamma_{dn}) \geq \epsilon_0 >0.$$
\vspace{0.1cm}

(A6). Both $l$ and $c$ diverge to infinity as $n\rightarrow\infty$, such that $1/l + l/\sqrt{n} = o(1)$, $lce^{c^2/2} = o(\sqrt{n})$, and $lc^{1/2}e^{-c^2/4} = o(1)$.
\end{assumption}

\begin{prop} \label{prop:uniform_consistency}
	Under assumptions (A1) and (A2), $\underline Z_t$ is a multivariate Gaussian process, and 
	$$\sup_{1\leq t\leq n}|\widehat U_{i,t} - U_{i,t}| = \mathcal O_p(1/\sqrt{n}).$$
\end{prop}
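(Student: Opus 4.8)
The plan is to establish the two assertions separately, since they are essentially independent. For the Gaussianity of $\underline Z_t$, I would use the explicit structure of Model \ref{model:1}. Because $\underline W_t$ is strictly stationary and Gaussian, each coordinate $W_{i,t}$ has a fixed $\mathcal N(\mu_i,\sigma_i^2)$ law; write $G_i$ for its CDF, so that $G_i(w)=\Phi((w-\mu_i)/\sigma_i)$. Since $f_i$ is continuous and strictly monotone, $Y_{i,t}=f_i(W_{i,t})$ has marginal CDF $F_i=G_i\circ f_i^{-1}$ when $f_i$ is increasing and $F_i=1-G_i\circ f_i^{-1}$ when $f_i$ is decreasing (continuity of $G_i$ lets us ignore the endpoint). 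Substituting $Y_{i,t}$ back into $F_i$ collapses the composition: $U_{i,t}=F_i(Y_{i,t})=G_i(W_{i,t})$ in the increasing case and $U_{i,t}=1-G_i(W_{i,t})$ in the decreasing case. As $W_{i,t}$ has full support on $\mathbb R$, $U_{i,t}\in(0,1)$ almost surely, so $Z_{i,t}=\Phi^{-1}(U_{i,t})$ is well defined, and using $1-\Phi(x)=\Phi(-x)$ one gets $Z_{i,t}=\pm(W_{i,t}-\mu_i)/\sigma_i$ according to the monotonicity direction. Hence $\underline Z_t$ is the image of $\underline W_t$ under a deterministic, time-invariant, coordinatewise affine map, so it is again a multivariate Gaussian process — with standard normal marginals, consistent with the (uncentered) autocovariance estimator used later.

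The uniform rate for the PIT estimation error is then immediate from Assumption (A2): since $\widehat U_{i,t}-U_{i,t}=\widehat F_i(Y_{i,t})-F_i(Y_{i,t})$, for every realization we have
$$\sup_{1\leq t\leq n}\lvert \widehat U_{i,t}-U_{i,t}\rvert \;\leq\; \sup_{y\in\mathbb R}\lvert \widehat F_i(y)-F_i(y)\rvert = \mathcal O_p(1/\sqrt n),$$
because the (data-dependent) evaluation points $\{Y_{i,t}\}_{t=1}^n$ all lie in $\mathbb R$ and the bound in (A2) is uniform in $y$. No further probabilistic argument is needed.

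There is no deep obstacle here; the only points requiring care are in the first part — getting the direction of monotonicity right when expressing $F_i$ through $G_i$ and $f_i^{-1}$, and checking that $U_{i,t}\in(0,1)$ almost surely so that $\Phi^{-1}(U_{i,t})$ is meaningful — both of which follow from continuity and full support of the Gaussian marginals. This proposition is a preparatory step: it identifies $\underline Z_t$ with a standardized copy of $\underline W_t$ and controls the PIT estimation error, and these two facts will later feed into the operator-norm consistency $\norm{\duohat{\boldsymbol\Gamma}_{dn}-\boldsymbol\Gamma_{dn}}_{op}\overset{P}{\rightarrow}0$ and the main bootstrap-validity theorem.
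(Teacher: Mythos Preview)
Your argument is correct. The paper actually states this proposition without proof, implicitly relying on the univariate analog already established in \cite{Das2017} and \cite{WangPolitis2019}; your write-up supplies exactly the natural argument those references use, namely the observation that the probability integral transform collapses $f_i$ so that $Z_{i,t}$ is a coordinatewise affine image of $W_{i,t}$, together with the trivial bound $\sup_t|\widehat F_i(Y_{i,t})-F_i(Y_{i,t})|\leq \sup_y|\widehat F_i(y)-F_i(y)|$ from (A2).
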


\begin{lemma}\label{lm:matrix_convergence}
	Under assumptions (A1)-(A5), as $n\rightarrow\infty$,  $\duohat{\boldsymbol\Gamma}_{dn}$ is positive definite in probability; also, both
	 $\norm{\duohat{\boldsymbol\Gamma}_{dn} - {\boldsymbol\Gamma}_{dn} }_{op}$ and 
	$\norm{\duohat{\boldsymbol\Gamma}_{dn}^{-1} - {\boldsymbol\Gamma}_{dn}^{-1} }_{op} $ converge to $0$ in probability.
\end{lemma}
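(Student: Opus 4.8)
The plan is to compare $\duohat{\boldsymbol\Gamma}_{dn}$ with the infeasible \emph{oracle} tapered estimator $\widehat{\boldsymbol\Gamma}_{\kappa,l}$ of \eqref{eq:taperedestimator} built from the true (unobserved) Gaussian variables $\underline Z_t$, and to bound
$$\norm{\duohat{\boldsymbol\Gamma}_{dn} - {\boldsymbol\Gamma}_{dn}}_{op} \le \norm{\duohat{\boldsymbol\Gamma}_{dn} - \widehat{\boldsymbol\Gamma}_{\kappa,l}}_{op} + \norm{\widehat{\boldsymbol\Gamma}_{\kappa,l} - {\boldsymbol\Gamma}_{dn}}_{op}$$
separately. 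Throughout I would use the elementary estimate that a block Toeplitz matrix with $d\times d$ blocks $\{B_h\}_{|h|<n}$ has operator norm at most $\sum_{|h|<n}\norm{B_h}_{op}$, together with the fact that for fixed $d$ each $\norm{B_h}_{op}$ is comparable to the entrywise $\ell_1$ norm $\lvert B_h\rvert_1$; this is what replaces the scalar Toeplitz arguments of \cite{WangPolitis2019}.

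For the oracle term — the block analogue of the McMurry--Politis flat-top bound — I would split $\widehat{\boldsymbol\Gamma}_{\kappa,l} - {\boldsymbol\Gamma}_{dn}$ into the deterministic bias part with blocks $(\kappa_l(h)-1){\boldsymbol\Gamma}_h$ and the stochastic part with blocks $\kappa_l(h)(\widehat{\boldsymbol\Gamma}_h - {\boldsymbol\Gamma}_h)$. Since the trapezoidal kernel \eqref{eq:taper} equals $1$ on $\{|h|\le l\}$ and vanishes off $\{|h|\le 2l\}$, the bias part is supported on $\{l<|h|\le 2l\}$ with operator norm $\lesssim \sum_{|h|>l}\lvert{\boldsymbol\Gamma}_h\rvert_1$, which tends to $0$ by (A3) and $l\to\infty$ in (A6); the stochastic part has operator norm at most $\sum_{|h|\le 2l}\norm{\widehat{\boldsymbol\Gamma}_h - {\boldsymbol\Gamma}_h}_{op}$, and by (A4) each summand has expectation $\lesssim 1/\sqrt n$, so the whole is $\mathcal{O}_p(l/\sqrt n) = o_p(1)$ by (A6) via Markov's inequality.

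The crux is the plug-in term $\norm{\duohat{\boldsymbol\Gamma}_{dn} - \widehat{\boldsymbol\Gamma}_{\kappa,l}}_{op} \le \sum_{|h|\le 2l}\norm{\duohat{\boldsymbol\Gamma}_h - \widehat{\boldsymbol\Gamma}_h}_{op}$, where $\duohat{\boldsymbol\Gamma}_h = \frac1n\sum_t \tildehat{\underline Z}_t \tildehat{\underline Z}_{t+h}^{\mathsf T}$ uses $\tildehat{Z}_{i,t} = \tilde\Phi_c^{-1}(\widehat U_{i,t})$ in place of $Z_{i,t}$. Expanding $\tildehat{Z}\tildehat{Z} - ZZ$ bilinearly, each block entry of $\duohat{\boldsymbol\Gamma}_h - \widehat{\boldsymbol\Gamma}_h$ is bounded by $\frac{c}{n}\sum_t \lvert\tildehat{Z}_{i,t}-Z_{i,t}\rvert$ (using $\lvert\tildehat{Z}_{j,t}\rvert\lesssim c$ from the thresholding) plus, by Cauchy--Schwarz, $(\frac1n\sum_t Z_{i,t}^2)^{1/2}(\frac1n\sum_t (\tildehat{Z}_{j,t}-Z_{j,t})^2)^{1/2}$, where $\frac1n\sum_t Z_{i,t}^2 = \mathcal{O}_p(1)$. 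To bound $\frac1n\sum_t\lvert\tildehat{Z}_{i,t}-Z_{i,t}\rvert$ and $\frac1n\sum_t(\tildehat{Z}_{i,t}-Z_{i,t})^2$, I would work on the high-probability event of Proposition \ref{prop:uniform_consistency} and split the index set by whether $\lvert Z_{i,t}\rvert$ lies below or above the threshold $c$: on the interior indices $\tilde\Phi_c^{-1}$ coincides with $\Phi^{-1}$, whose derivative there is bounded by a multiple of $e^{c^2/2}$, so each per-term error is $\mathcal{O}_p(e^{c^2/2}/\sqrt n)$; on the tail indices the contributions of both $\tildehat{Z}_{i,t}$ and $Z_{i,t}$ are controlled by the Gaussian tail estimates $\mathbb{P}(\lvert Z\rvert>c)\lesssim e^{-c^2/2}/c$, $\mathbb{E}\,\lvert Z\rvert\mathbf{1}\{\lvert Z\rvert>c\}\lesssim e^{-c^2/2}$, and $\mathbb{E}\,Z^2\mathbf{1}\{\lvert Z\rvert>c\}\lesssim c\,e^{-c^2/2}$. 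Multiplying through by the $\mathcal{O}(l)$ retained lags, the interior contribution is $\mathcal{O}_p(lc\,e^{c^2/2}/\sqrt n)$ and the tail contributions are $\mathcal{O}_p(lc\,e^{-c^2/2})$ and $\mathcal{O}_p(lc^{1/2}e^{-c^2/4})$, all $o_p(1)$ precisely under the rate conditions $lce^{c^2/2}=o(\sqrt n)$ and $lc^{1/2}e^{-c^2/4}=o(1)$ of (A6). This boundary-and-tail bookkeeping is the step I expect to be the main obstacle; it mirrors the proof of Theorem 5.1 in \cite{WangPolitis2019}, the only new feature being that the $\mathcal{O}(l)$ summation now runs over $d\times d$ blocks rather than scalars.

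Finally I would assemble the pieces: combining the two bounds gives $\norm{\duohat{\boldsymbol\Gamma}_{dn} - {\boldsymbol\Gamma}_{dn}}_{op}\overset{P}{\rightarrow}0$. Positive definiteness in probability then follows from Weyl's inequality and (A5): with probability tending to $1$, $\lambda_{\min}(\duohat{\boldsymbol\Gamma}_{dn}) \ge \lambda_{\min}({\boldsymbol\Gamma}_{dn}) - \norm{\duohat{\boldsymbol\Gamma}_{dn}-{\boldsymbol\Gamma}_{dn}}_{op} \ge \epsilon_0/2 > 0$. On that event $\norm{\duohat{\boldsymbol\Gamma}_{dn}^{-1}}_{op}\le 2/\epsilon_0$, and the resolvent identity $\duohat{\boldsymbol\Gamma}_{dn}^{-1} - {\boldsymbol\Gamma}_{dn}^{-1} = \duohat{\boldsymbol\Gamma}_{dn}^{-1}({\boldsymbol\Gamma}_{dn} - \duohat{\boldsymbol\Gamma}_{dn}){\boldsymbol\Gamma}_{dn}^{-1}$, together with $\norm{{\boldsymbol\Gamma}_{dn}^{-1}}_{op}\le 1/\epsilon_0$ from (A5), yields $\norm{\duohat{\boldsymbol\Gamma}_{dn}^{-1} - {\boldsymbol\Gamma}_{dn}^{-1}}_{op}\le (2/\epsilon_0^2)\,\norm{\duohat{\boldsymbol\Gamma}_{dn}-{\boldsymbol\Gamma}_{dn}}_{op} = o_p(1)$.
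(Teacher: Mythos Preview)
Your proposal is correct and follows essentially the same route as the paper: bound the operator norm by a lag-sum via the block Toeplitz structure, split into the plug-in error $\duohat{\boldsymbol\Gamma}_h-\widehat{\boldsymbol\Gamma}_h$ (handled by the thresholded-quantile argument of \cite{WangPolitis2019} under Proposition~\ref{prop:uniform_consistency} and (A6)) and the oracle error $\widehat{\boldsymbol\Gamma}_h-\boldsymbol\Gamma_h$ (handled by the flat-top analysis, which the paper cites as Theorem~2.1 of \cite{Jentsch}). Your explicit Weyl/resolvent step for positive definiteness and inverse convergence is correct and in fact fills in details the paper leaves implicit; one small slip is that the bias blocks $(\kappa_l(h)-1)\boldsymbol\Gamma_h$ are supported on all of $\{|h|>l\}$, not only $\{l<|h|\le 2l\}$, but your bound $\sum_{|h|>l}\lvert\boldsymbol\Gamma_h\rvert_1$ already accounts for this.
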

\begin{proof}
	First of all, since $\duohat{\boldsymbol\Gamma}_{dn} - {\boldsymbol\Gamma}_{dn}$ is symmetric, 
	\begin{equation*}\label{eq:ineq1}
		\begin{split}
	\norm{\duohat{\boldsymbol\Gamma}_{dn} - {\boldsymbol\Gamma}_{dn} }_{op} 
&\leq
	 \sqrt{\norm{\duohat{\boldsymbol\Gamma}_{dn} - {\boldsymbol\Gamma}_{dn} }_{1}\norm{\duohat{\boldsymbol\Gamma}_{dn} - {\boldsymbol\Gamma}_{dn} }_{\infty}}\\
&=
 \norm{\duohat{\boldsymbol\Gamma}_{dn} - {\boldsymbol\Gamma}_{dn} }_{\infty} \\
 & = \max_{1\leq i\leq dn} \sum_{j=1}^{dn} \lvert(\duohat{\boldsymbol\Gamma}_{dn} - {\boldsymbol\Gamma}_{dn})(i,j)\rvert.
		\end{split}
	\end{equation*}
By block Toeplitz property,
\begin{equation}\label{eq:ineq2}
	\begin{split}
		\max_{1\leq i\leq dn} \sum_{j=1}^{dn} \lvert(\duohat{\boldsymbol\Gamma}_{dn} - {\boldsymbol\Gamma}_{dn})(i,j)\rvert
		&\leq 
		\max_{0\leq k \leq n-1} \sum_{i = kd+1}^{(k+1)d}\sum_{j=1}^{dn}\lvert(\duohat{\boldsymbol\Gamma}_{dn} - {\boldsymbol\Gamma}_{dn})(i,j)\rvert\\
		&\leq
		\sum_{h = 1-n}^{n-1} \sum_{i=1}^d\sum_{j=1}^d \lvert(\duohat{\boldsymbol\Gamma}_h - \boldsymbol\Gamma_h)(i,j)\rvert.
	\end{split}
\end{equation}
Notice that $\sum_{i=1}^d\sum_{j=1}^d \lvert(\duohat{\boldsymbol\Gamma}_h - \boldsymbol\Gamma_h)(i,j)\rvert = |\duohat{\boldsymbol\Gamma}_h - \boldsymbol\Gamma_h|_1$ is the entry-wise $l_1$ norm, thus the RHS of equation \eqref{eq:ineq2} equals
\begin{equation*}
	\begin{split}
		\sum_{h = 1-n}^{n-1} \lvert\duohat{\boldsymbol\Gamma}_h - \boldsymbol\Gamma_h\rvert_1 
		&=
		\sum_{h = 1-n}^{n-1} \lvert\duohat{\boldsymbol\Gamma}_h - \widehat{\boldsymbol\Gamma}_h + \widehat{\boldsymbol\Gamma}_h - \boldsymbol\Gamma_h\rvert_1\\
		&\leq
		\sum_{h = 1-n}^{n-1}\lvert\duohat{\boldsymbol\Gamma}_h - \widehat{\boldsymbol\Gamma}_h\rvert_1 +  \sum_{h = 1-n}^{n-1}\lvert\widehat{\boldsymbol\Gamma}_h - \boldsymbol\Gamma_h\rvert_1.
	\end{split}
\end{equation*}

 The second sum was proved to converge to $0$ in probability by Theorem 2.1, \cite{Jentsch} under (A3), (A4) and (A6). We only need to show the first sum $\sum_{h=1-n}^{n-1}\lvert\duohat{\boldsymbol\Gamma}_h - \widehat{\boldsymbol\Gamma}_h\rvert_1 $  converges to $0$ as well.
 \begin{equation}
 	\begin{split}
\sum_{h=1-n}^{n-1}\lvert\duohat{\boldsymbol\Gamma}_h - \widehat{\boldsymbol\Gamma}_h\rvert_1 
&=
\sum_{1\leq i,j\leq d} \sum_{h=1-n}^{n-1} \lvert\duohat{\boldsymbol\Gamma}_h(i,j) -\widehat{ \boldsymbol\Gamma}_h(i,j)\rvert\\
&= \sum_{1\leq i,j\leq d} \sum_{h=1-n}^{n-1}
\lvert\frac{1}{n} \sum_{t=1}^{n-|h|}(\tildehat Z_{i,t}
\tildehat Z_{j,t+h} -  Z_{i,t}Z_{j,t+h})\rvert,
 	\end{split}
 \end{equation}
As we assume the dimension $d$ is fixed, we only need to show the second sum
$ \sum_{h=1-n}^{n-1}
\lvert\frac{1}{n} \sum_{t=1}^{n-|h|}(\tildehat Z_{i,t}
\tildehat Z_{j,t+h} -  Z_{i,t}Z_{j,t+h})\rvert$
converges to $0$ for all pairs of $(i,j)$. Under the result of proposition \ref{prop:uniform_consistency} and (A6), \cite{WangPolitis2019} shows the sum does converge to 0 in probability. Thus $\sum_{h=1-n}^{n-1}\lvert\duohat{\boldsymbol\Gamma}_h - \widehat{\boldsymbol\Gamma}_h\rvert_1 \overset{P}{\rightarrow} 0$, and $\norm{\duohat{\boldsymbol\Gamma}_{dn} - {\boldsymbol\Gamma}_{dn} }_{op}\overset{P}{\rightarrow} 0$.
\end{proof}
\begin{remark}
	For the next theorem, we need a slightly stronger result of Lemma \ref{lm:matrix_convergence}, which is
	\begin{equation}\label{eq:conv_with_rate}
	\frac{1}{(\log n)^2}\norm{\duohat{\boldsymbol\Gamma}_{dn} - {\boldsymbol\Gamma}_{dn} }_{op}\overset{P}{\rightarrow} 0.
	\end{equation}
	This guarantees $\norm{\duohat{\boldsymbol\Gamma}_{dn}^{1/2} - {\boldsymbol\Gamma}_{dn}^{1/2} }_{op}\overset{P}{\rightarrow} 0$ as well as $\norm{\duohat{\boldsymbol\Gamma}_{dn}^{-1/2} - {\boldsymbol\Gamma}_{dn}^{-1/2} }_{op}\overset{P}{\rightarrow} 0$. Note that \eqref{eq:conv_with_rate} is achievable if we choose appropriate rates for $l$ and $c$.
\end{remark}
 
\begin{theorem}\label{thm:asymptotic_validity}
	Under (A1) - (A6) such that \eqref{eq:conv_with_rate} also holds,
	 then the predictive distribution $F_{n+1|n}$ is continuous, and 
	\begin{equation}\label{eq:consistency_pred_distribution}
		\sup_{\underline y\in\mathbb R^d} |\widehat F_{n+1|n}(\underline y) - F_{n+1|n}(\underline y)|\overset{P}{\rightarrow} 0.
	\end{equation}
Also, assume that $\mathbb E_{\underline Y\sim F_{n+1|n}} \norm{\underline Y}_p <\infty$, and that under equation \eqref{eq:consistency_pred_distribution}, $\widehat{\underline Y}_{n+1} \rightarrow \mathcal P(\underline Y_{n+1}|\mathcal F_n)$ in probability. Then the prediction region generated by the MFB algorithm \ref{alg:1} is asymptotically valid.
\end{theorem}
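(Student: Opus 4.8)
The plan is to follow the proof of Theorem~5.1 in \cite{WangPolitis2019}, carrying everything through the Gaussian layer $\underline Z_t$ and then transferring back to the $\underline Y$-scale; the only new feature is that $\boldsymbol\Gamma_{dn}$ is block Toeplitz, which is exactly what Lemma~\ref{lm:matrix_convergence} and its strengthening \eqref{eq:conv_with_rate} are built to handle. First I would record the continuity of $F_{n+1\mid n}$: by Proposition~\ref{prop:uniform_consistency}, $\underline Z_t$ is a standardized purely non-deterministic multivariate Gaussian process, so conditionally on $\mathcal F_n=\sigma(\underline Z_{dn})$ the vector $\underline Z_{n+1}$ is Gaussian with mean $\underline\mu_{n+1\mid n}=\boldsymbol\Gamma_{n+1,1:n}\boldsymbol\Gamma_{dn}^{-1}\underline Z_{dn}$ and covariance $\boldsymbol\Sigma_{n+1\mid n}=\boldsymbol\Gamma_0-\boldsymbol\Gamma_{n+1,1:n}\boldsymbol\Gamma_{dn}^{-1}\boldsymbol\Gamma_{n+1,1:n}^{\mathsf T}$, where $\boldsymbol\Gamma_{n+1,1:n}$ is the $d\times dn$ block of cross-covariances; by (A5) and the Schur-complement bound $\boldsymbol\Sigma_{n+1\mid n}\succeq\epsilon_0\mathbf I_d$, so this conditional law has a Lebesgue density. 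Since, under (A1), each $f_i$ is a $C^1$ strictly monotone bijection and $F_i^{-1}\circ\Phi$ coincides with $f_i$ up to an affine reparametrization of its argument, the map $G:=(F_1^{-1}\circ\Phi,\dots,F_d^{-1}\circ\Phi)$ is a $C^1$ diffeomorphism onto its image, so $\underline Y_{n+1}=G(\underline Z_{n+1})$ also has a density, i.e.\ $F_{n+1\mid n}$ is continuous.

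Next I would prove \eqref{eq:consistency_pred_distribution}. By construction of Algorithm~\ref{alg:1}, $\widehat F_{n+1\mid n}$ is the law of $\widehat G(\underline Z_{n+1}^*)$, where $\widehat G=(\widehat F_1^{-1}\circ\Phi,\dots,\widehat F_d^{-1}\circ\Phi)$ and, conditionally on the data, $\underline Z_{n+1}^*$ is (asymptotically) $\mathcal N(\widehat{\underline\mu}_{n+1\mid n},\widehat{\boldsymbol\Sigma}_{n+1\mid n})$ with $\widehat{\underline\mu}_{n+1\mid n},\widehat{\boldsymbol\Sigma}_{n+1\mid n}$ the Schur-complement formulas above after replacing $\boldsymbol\Gamma$ by $\duohat{\boldsymbol\Gamma}$ and $\underline Z_{dn}$ by $\tildehat{\underline Z}_{dn}$ (equivalently, the conditional mean and covariance of $\underline Z_{n+1}^*$ given $\widehat{\underline\xi}_{dn}$ read off from $\duohat{\boldsymbol\Gamma}_{d(n+1)}^{1/2}$). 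From \eqref{eq:conv_with_rate} I would obtain $\norm{\duohat{\boldsymbol\Gamma}_{d(n+1)}^{1/2}-\boldsymbol\Gamma_{d(n+1)}^{1/2}}_{op}\overset{P}{\rightarrow}0$ and the analogue for inverse square roots; combining this with Proposition~\ref{prop:uniform_consistency} (which, as in \cite{WangPolitis2019}, gives a uniform bound on $\tildehat Z_{i,t}-Z_{i,t}$ because the thresholding level $c$ in $\tilde\Phi_c^{-1}$ controls the derivative of $\Phi^{-1}$ under (A6)), and using (A3) and (A5) to confine the cross-covariance weights to finitely many lags, the operator-norm rate in \eqref{eq:conv_with_rate} is fast enough to absorb the $O_P(\log n)$ factors coming from Gaussian maxima and yields $\widehat{\underline\mu}_{n+1\mid n}-\underline\mu_{n+1\mid n}\overset{P}{\rightarrow}0$ and $\widehat{\boldsymbol\Sigma}_{n+1\mid n}-\boldsymbol\Sigma_{n+1\mid n}\overset{P}{\rightarrow}0$. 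By (A2) and continuity of each $F_i^{-1}$, $\widehat G\to G$ uniformly on compacts, and since $F_{n+1\mid n}$ is continuous while $\widehat{\boldsymbol\Sigma}_{n+1\mid n}$ stays bounded away from $0$, a Polya-type argument upgrades these to \eqref{eq:consistency_pred_distribution}.

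To finish I would transfer everything to the bootstrap world and then read off coverage. The bootstrap series $\mathbf Y_n^*$ is by construction drawn from Model~\ref{model:1} with parameters $(\widehat F_i,\duohat{\boldsymbol\Gamma}_{dn})$ and innovations resampled from $\widehat{\underline\xi}_{dn}$, whose empirical law converges to $\mathcal N(0,1)$; hence $\underline Z_t^*$ is asymptotically a stationary Gaussian process with autocovariances $\boldsymbol\Gamma_h$, and (A2)--(A6) hold for $\mathbf Y_n^*$ in probability, so the previous step applied in the bootstrap world gives $\sup_{\underline y}|\widehat F_{n+1\mid n}^*(\underline y)-F_{n+1\mid n}(\underline y)|\overset{P}{\rightarrow}0$, and then the hypothesis that \eqref{eq:consistency_pred_distribution} forces $\widehat{\underline Y}_{n+1}\to\mathcal P(\underline Y_{n+1}\mid\mathcal F_n)$ --- a statement about the argmin in \eqref{eq:predictor} that applies verbatim in the bootstrap world --- gives $\widehat{\underline Y}_{n+1}^*\to\mathcal P(\underline Y_{n+1}\mid\mathcal F_n)$ in probability as well. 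Thus $\underline R_{n+1}=\underline Y_{n+1}-\mathcal P(\underline Y_{n+1}\mid\mathcal F_n)+o_P(1)$ with $\underline Y_{n+1}\sim F_{n+1\mid n}$, while $\underline R_{n+1}^*=\underline Y_{n+1}^*-\mathcal P(\underline Y_{n+1}\mid\mathcal F_n)+o_P(1)$ with $\underline Y_{n+1}^*\sim\widehat F_{n+1\mid n}$, so by the uniform convergence just proved $\norm{\underline R_{n+1}^*}_p$ and $\norm{\underline R_{n+1}}_p$ have the same limiting conditional law, that of $\norm{\underline Y-\mathcal P(\underline Y_{n+1}\mid\mathcal F_n)}_p$ with $\underline Y\sim F_{n+1\mid n}$; this law has a continuous distribution function because $F_{n+1\mid n}$ has a density and every sphere $\{\underline y:\norm{\underline y-\mathcal P}_p=r\}$ is Lebesgue-null (and $\mathbb E_{\underline Y\sim F_{n+1\mid n}}\norm{\underline Y}_p<\infty$ makes the quantiles finite). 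Hence $q_\alpha^*$ converges in probability to the corresponding true upper $\alpha$-quantile, and Polya's theorem plus Slutsky give $\mathbb P(\underline Y_{n+1}\in\{\underline y:\norm{\underline y-\widehat{\underline Y}_{n+1}}_p\le q_\alpha^*\}\mid\mathcal F_n)=\mathbb P(\norm{\underline R_{n+1}}_p\le q_\alpha^*\mid\mathcal F_n)\overset{P}{\rightarrow}1-\alpha$; unconditional validity follows by bounded convergence, and the studentized and fixed-predictor variants of Algorithm~\ref{alg:1} are handled by the same argument applied to the correspondingly transformed root.

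The step I expect to be the main obstacle is the consistency of $\widehat{\underline\mu}_{n+1\mid n}$ and $\widehat{\boldsymbol\Sigma}_{n+1\mid n}$ despite the ambient dimension $dn\to\infty$: this is precisely where summability (A3) and the uniform eigenvalue bounds (A5) are indispensable (to localize the effect of $\tildehat{\underline Z}_{dn}-\underline Z_{dn}$ to finitely many lags) and where the $1/(\log n)^2$-type sharpening \eqref{eq:conv_with_rate} of Lemma~\ref{lm:matrix_convergence} is used to beat the $O_P(\log n)$ growth of Gaussian maxima --- coupled with the bookkeeping to confirm that (A2)--(A6), and in particular consistency of the flat-top estimator \eqref{eq:taperedestimator}, really do transfer to the bootstrap world, where $\mathbf Y_n^*$ obeys Model~\ref{model:1} only approximately.
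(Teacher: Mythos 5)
Your proposal is correct and follows essentially the same route as the paper, which itself only states that the proof ``proceeds the same as that of Theorem 5.1 of \cite{WangPolitis2019}'': you pass through the Gaussian layer, use Lemma \ref{lm:matrix_convergence} and the strengthened rate \eqref{eq:conv_with_rate} to handle the block-Toeplitz covariance, and finish with the standard quantile-convergence/Polya argument. In fact your sketch supplies considerably more detail than the paper does, and you correctly identify the one genuinely delicate step (consistency of the estimated conditional mean and covariance despite the ambient dimension $dn\to\infty$), which is precisely where the cited univariate argument does the real work.
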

The proof of Theorem \ref{thm:asymptotic_validity} proceeds the same  as that of Theorem 5.1 of \cite{WangPolitis2019}.

\section{Numerical Results}\label{sec:numerical}
\subsection{Synthetic data experiment}
\textbf{Experiment setup.} Our main focus in this section is to numerically demonstrate the coverage performance of the multivariate MFB. Consider the following nonlinear time series model with dimension $d = 2$:
$$
Y_{i,t} = f_i(W_{i,t}),
$$
where $f_1(x) = f_2(x) = sgn(x) \sqrt{|x|}$, and $\underline W_t$ is a VAR(1) process with Gaussian innovations:
$$\underline W_t = \boldsymbol A\underline W_{t-1} + \underline \epsilon_t,$$
where we set $\boldsymbol A = \begin{pmatrix}
	0.5 & 0.2 \\
	0.2 & 0.6
\end{pmatrix}$, $\underline \epsilon_t \overset{i.i.d.}{\sim} \mathcal N(\mathbf 0, \boldsymbol B)$ with $\boldsymbol{B} = \begin{pmatrix}
2 & 0.5\\
0.5 & 2
\end{pmatrix}$. 
Since the $f_i$s are continuous monotone functions and $\underline W_t$ is a Gaussian process, $\underline Y_t$ satisfies model \ref{model:1}. 

Let $\mathbf Y_n = \{\underline Y_t\}_{t=1}^n$ be an arbitrary sample path drawn from above. 
By applying algorithm \ref{alg:1} we then generate a two sided prediction region: $\widehat {\mathbf C}_{1-\alpha}(\mathbf Y_n)$. 
Let $\{\underline Y_{n+1}^{(m)}\}_{m=1}^M$ be a set of future observations sampled directly from the above model conditioning on $\mathbf Y_n$, namely $\underline Y_{n+1}^{(m)}\overset{i.i.d.}{\sim} F_{n+1|n}$.  The coverage probability of $\widehat {\mathbf C}_{1-\alpha}(\mathbf Y_n)$ 
is estimated by the empirical coverage rate for $\{\underline Y_{n+1}^{(m)}\}_{m=1}^M$, i.e.,

$$\widehat{CVR}(\mathbf Y_n) = \frac{1}{M}\sum_{m=1}^M  I\left(\underline Y_{n+1}^{(m)}\in\widehat {\mathbf C}_{1-\alpha}(\mathbf Y_n)\right).$$

To achieve stable estimations for the coverage probability, we generate multiple realizations of $\mathbf Y_n$ and calculate the average of the $CVR$ estimations as our  metric. 

\textbf{Parameter selection.} The following parameters need to be properly selected in algorithm \ref{alg:1}: the bandwidths $\{b_i\}_{i=1}^d$ for the nonparametric CDFs $\{\widehat F_i\}_{i=1}^d$; the banding parameter $l$ in $\duohat{\boldsymbol \Gamma}_{dn}$; the choice of optimal predictor as well as the $L_p$ norm when evaluating the root $\underline R_{n+1}$.

While there are both sample-based rules as well as cross-validation for selecting $b_i$ and $l$, see e.g, \cite{Das2017} and \cite{mcmurrypolitis2010}, 
 in this experiment we take another approach. We use a fixed set of parameters for the bandwidths $\{b_i\}_{i=1}^d$, and evaluate the performance of the MFB algorithm with various combinations of the remaining parameters. This way, we can see the effect of each parameter on the performance of MFB.
 The coverage metrics are obtained for a range of sample sizes $n\in\{100, 200, 300, 400, 500\}$. We set $\alpha = 0.05$ which means the nominal coverage should be $95\%$.
 Results of our simulations are plotted below in Figure \ref{fig:cvr_n=100} to \ref{fig:cvr_n=500} . 

As expected, as $n$ increases, the coverage probability converges to the nominal $1-\alpha$, showing asymptotic validity. For this particular example, the choice of $L^p$ norm and predictor type has effect on the relative advantage between MFB with resampled/fixed predictor. 
Note that if we compare the best performance of the two algorithms according to closeness of coverage probability with respect to the nominal level, both seem to work very well under certain parameter combinations.

 \begin{figure}[h]
	\centering
 \includegraphics[scale = 0.8]{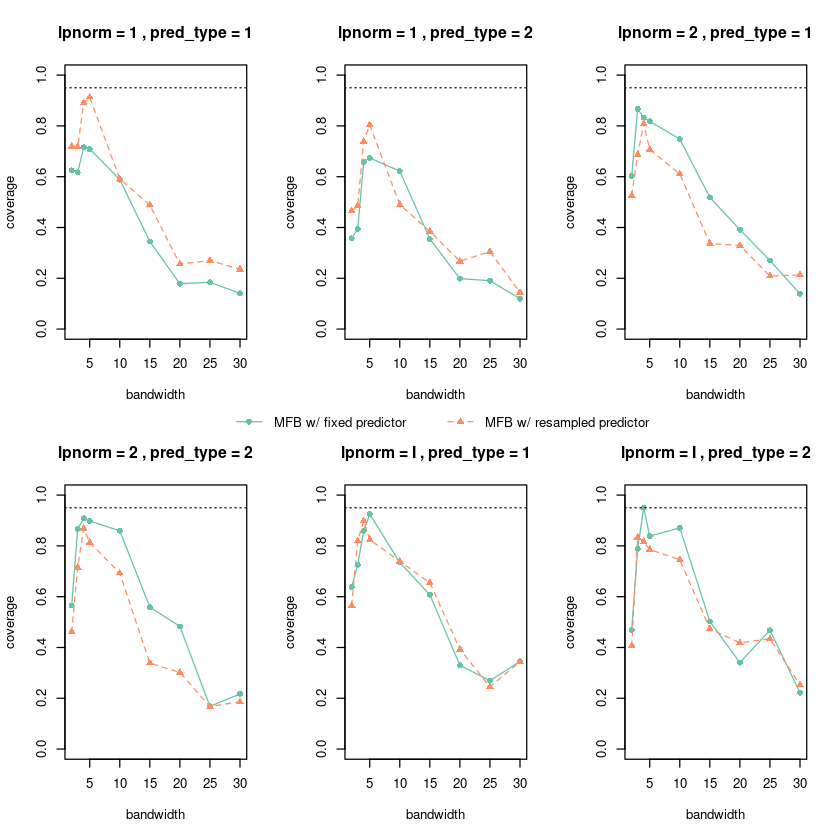}
	\caption{CVR plots with $n=100$.}
	\label{fig:cvr_n=100}
 \end{figure}
 
  \begin{figure}[h]
 	\centering
 	\includegraphics[scale = 0.8]{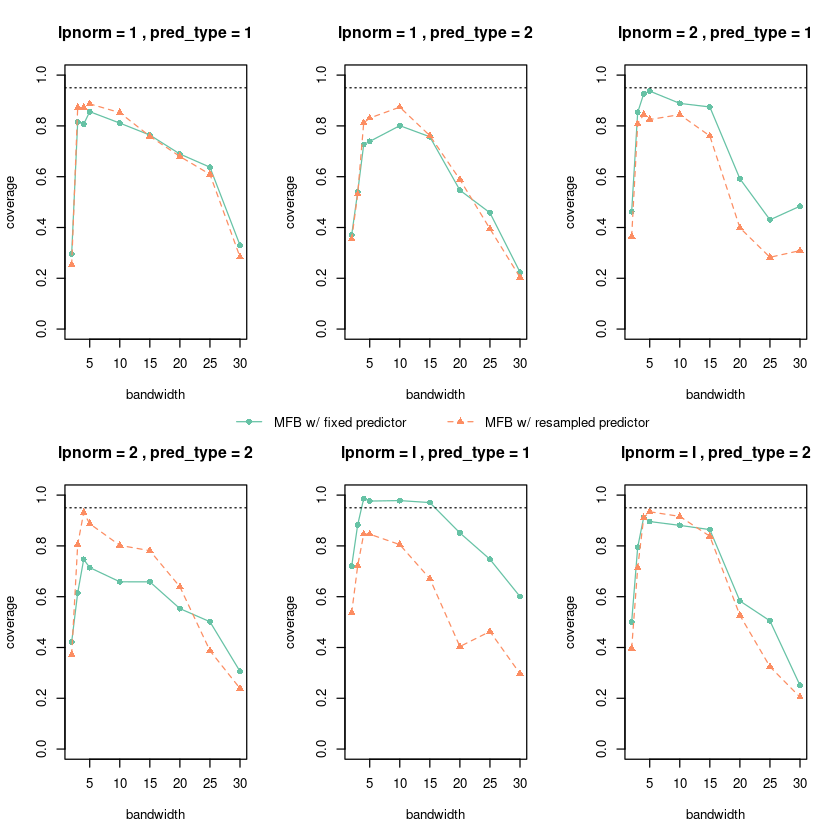}
 	\caption{CVR plots with $n=200$.}
 	\label{fig:cvr_n=200}
 \end{figure} \begin{figure}[h]
 \centering
 \includegraphics[scale = 0.8]{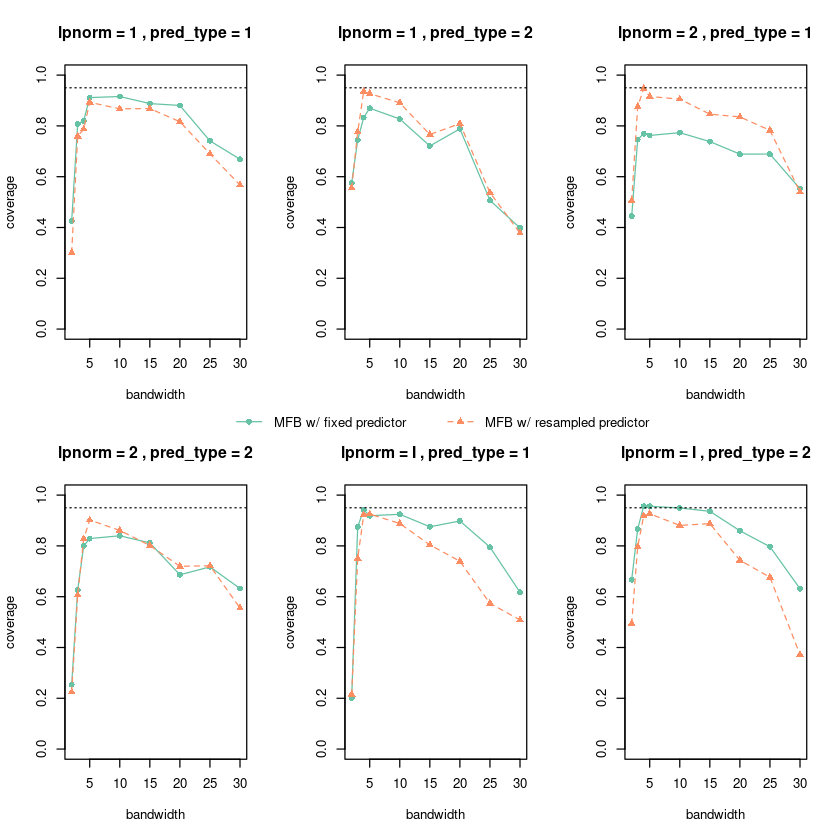}
 \caption{CVR plots with $n=300$.}
 \label{fig:cvr_n=300}
\end{figure} \begin{figure}[h]
\centering
\includegraphics[scale = 0.8]{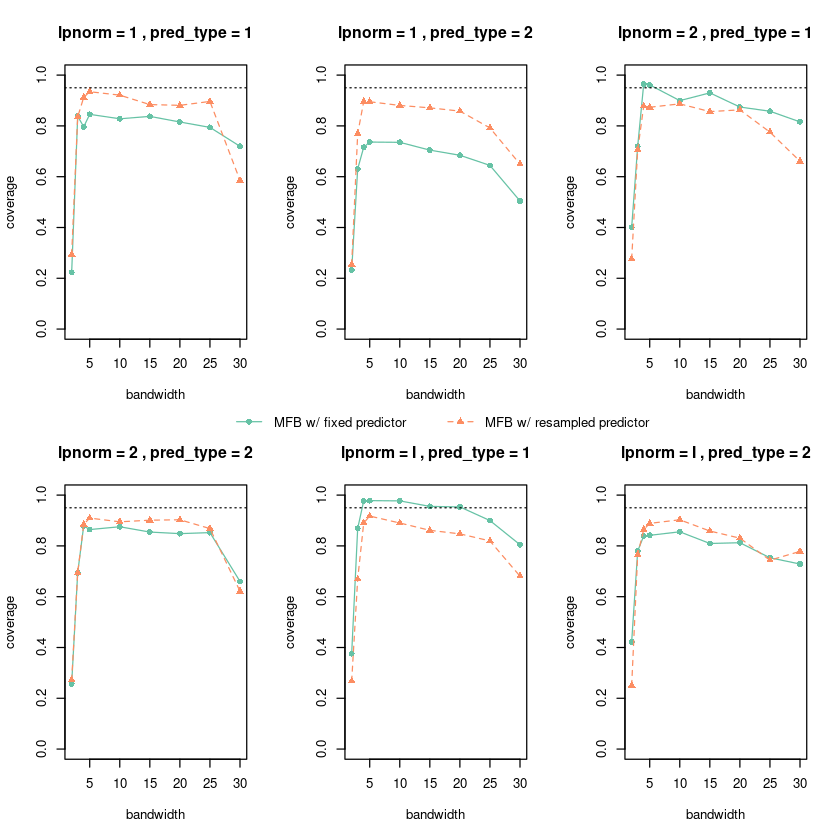}
\caption{CVR plots with $n=400$.}
\label{fig:cvr_n=400}
\end{figure} \begin{figure}[h]
\centering
\includegraphics[scale = 0.8]{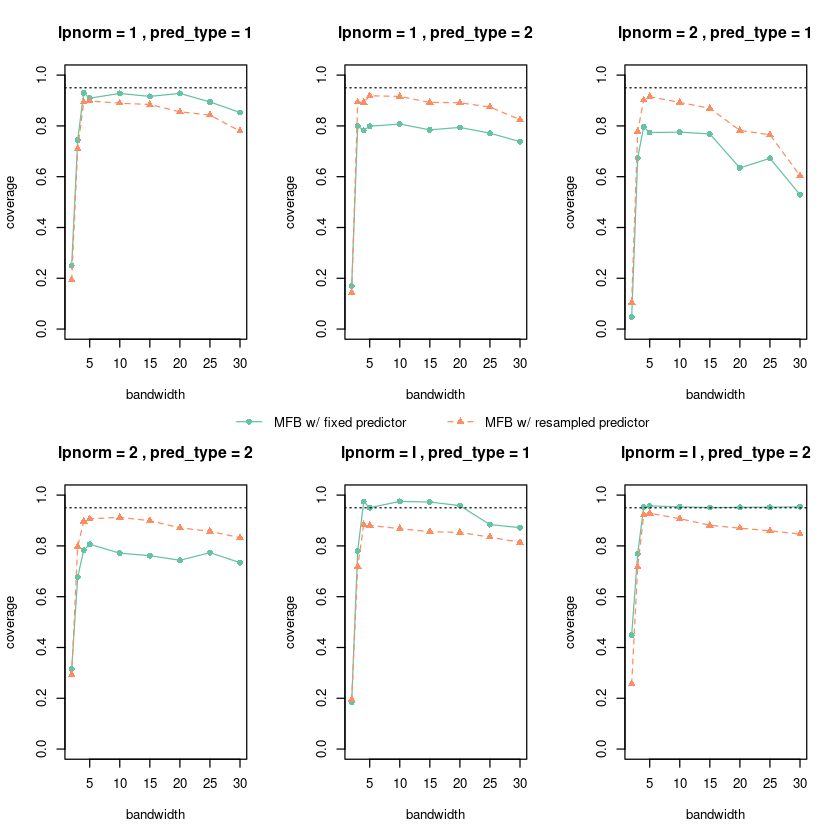}
\caption{CVR plots with $n=500$.}
\label{fig:cvr_n=500}
\end{figure}

 \subsection{Joint prediction band for  heavy-tailed financial returns}
 A lot of real world time series data exhibit heavy-tailed phenomena, particularly  exemplified under the realm of financial time series, wherein a common characteristic is the so-called leptokurtosis, i.e.,
 $$Kurt = \frac{\mu_4}{\sigma^4} >3,$$
 where $\mu_4 = \mathbb E(Y_t - \mu)^4$ is the centered fourth moment of the univariate time series $Y_t$, and $\sigma = \sqrt{\mathbb E(Y_t - \mu)^2}$ is the standard deviation. 
 Under a leptokurtic situation, the marginal distribution of the time series has heavier tail than the normal distribution. 
 Accurate modeling for heavy-tailedness 
is of great importance in the financial market.  
For example, Value-at-Risk(VaR) is a fundamental  metric commonly used to measure the risk associated with an asset, defined as 
$$VaR(\alpha) = \inf\{y\in\mathbb R: F(y)\geq \alpha\}.$$
Clearly, VaR captures the (left) heavy-tailedness of the CDF $F$. 

A common approach for modeling heavy-tailedness of financial data is through a Gaussian mixture model: Let $Y_t = \sigma_t z_t$, where $\sigma_t$ is random and $\mathcal F_{t-1}$-measurable; $z_t\sim \mathcal N(0,1)$. Then $Y_t$ is distributed according to a Gaussian mixture, thus leptokurtic. The famous ARCH/GARCH model, and the more recent NoVaS transformation of \cite{novas_2007}  are both paragons of this approach.

A different approach is to model heavy-tailedness via nonlinear transformation models. In particular, 
$$Y_t = f(W_t)$$
of model \ref{model:1} is a classic nonlinear model studied by many, both from a theoretical perspective (cf. \cite{nonstablegaussian}, \cite{Breuer1983}) and also an applied perspective(cf. \cite{Hull1998}). 
As previously discussed in Section \ref{sec:multivariate_JPB_for_univariate}, the multivariate MFB algorithm can be used to generate JPB under this setup. The JPB will be useful to describe probable region of future paths and thus provide more information for  trading. 
In this section, we present some numerical results on the empirical coverage performance of the MFB-based JPB for daily stock returns.
We also compare them with a benchmark method, which is JPB based on a garch(1,1)-bootstrap -- see \cite{econometrics7030034} for details. 

The details of our experiments goes as follows. We pick the following stocks: AAPL, AMZN, TSLA, GME as candidate datasets, where we gathered daily stock returns from 2018-01-01 to 2021-08-31, denoted as  $\{Y_t\}_{t=1}^n$ with $n = 922$. 
We then sequentially generate a collection of $(past, future)$ pairs:
\begin{equation}\label{exp:stock_data}
\left\{ \left(  \{Y_{i}\}_{i=t-n_0+1}^t, \{Y_{j}\}_{j=t+1}^{t+h}\right): t = n_0 + kh, 0\leq k\leq \lfloor\frac{n-n_0}{h}\rfloor\right\}.
\end{equation}
The parameter $n_0$ represents the number of days used to backtrack past data for prediction purpose; and $h$ represents the dimension of future data we try to predict.
By using the data generation scheme in \eqref{exp:stock_data}, the future observations are non-overlapping and are better suited for calculating the empirical coverage. 
To simplify notations, we let 
$\underline X_{k,1} = (Y_{kh+1},\cdots,Y_{kh+n_0})^{\mathsf T}$ and $\underline X_{k,2} = (Y_{n_0 + kh + 1},\cdots, Y_{n_0 + (k+1)h})^{\mathsf T}.$
The JPB calculated from the data $\underline X_{k,1}$ is denoted by $\widehat {\mathbf C}(\underline  X_{k,1} )$.
Finally, the empirical coverage rate(ECVR) based on data $Y_t$ and parameters $n_0$, $h$
is calculated by
$$ECVR(\{Y_t\}_{t=1}^n, n_0, h) = \frac{1}{\lfloor\frac{n-n_0}{h}\rfloor + 1} \sum_{k=0}^{\lfloor\frac{n-n_0}{h}\rfloor} \underline  X_{k,2} \in \widehat {\mathbf C}(\underline  X_{k,1} ).$$

We plot the ECVR against a range of $n_0$ values for different stocks and $h$. In the MFB algorithm, we choose to use the $L^2$-optimal predictor and the $L^1$ norm for root evaluation; the bandwidth $b = 0.01$ and banding parameter $l = 0.4$ are determined via cross validation. The results of our experiments are presented in Figure \ref{fig:ecvr_AAPL} - \ref{fig:ecvr_TSLA}.

 \begin{figure}[h]
	\centering
	\includegraphics[scale = 0.25]{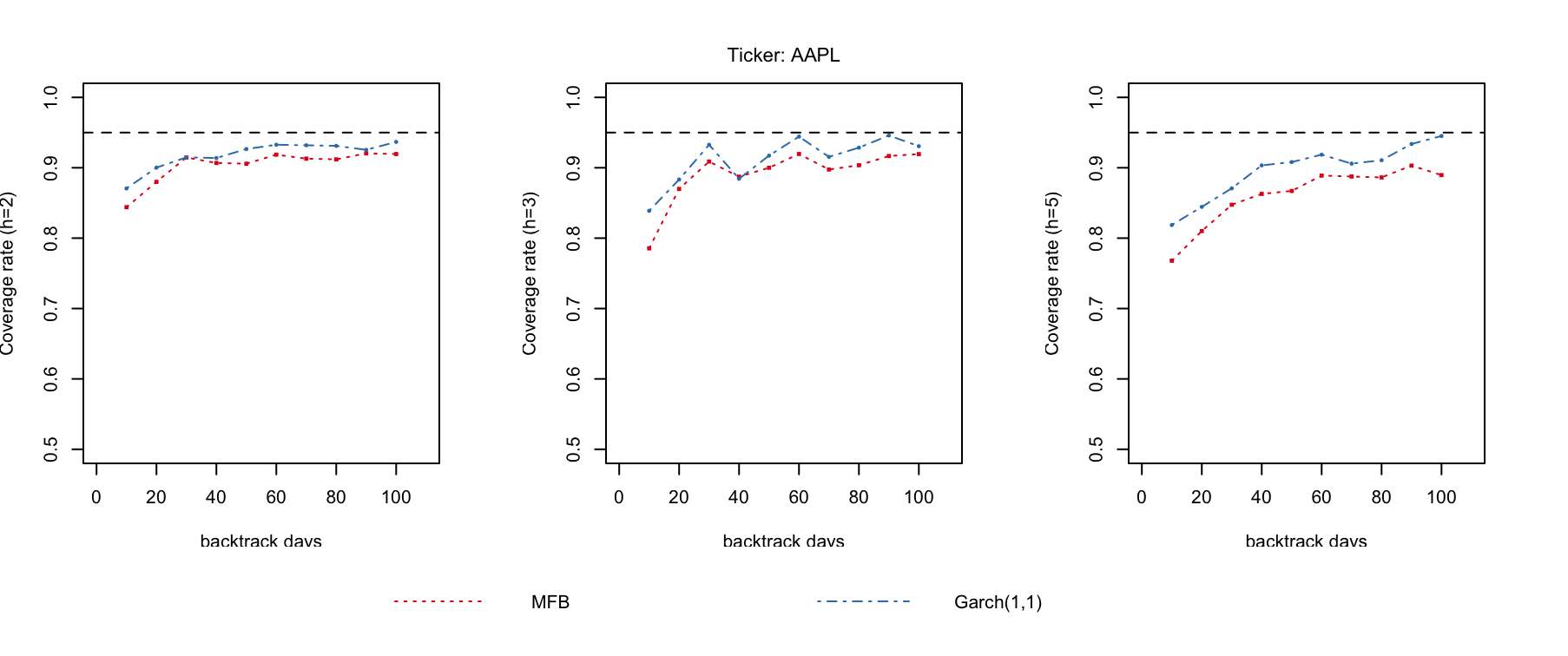}
	\caption{ECVR plots for AAPL.}
	\label{fig:ecvr_AAPL}
\end{figure} 

\begin{figure}[h]
	\centering
	\includegraphics[scale = 0.32]{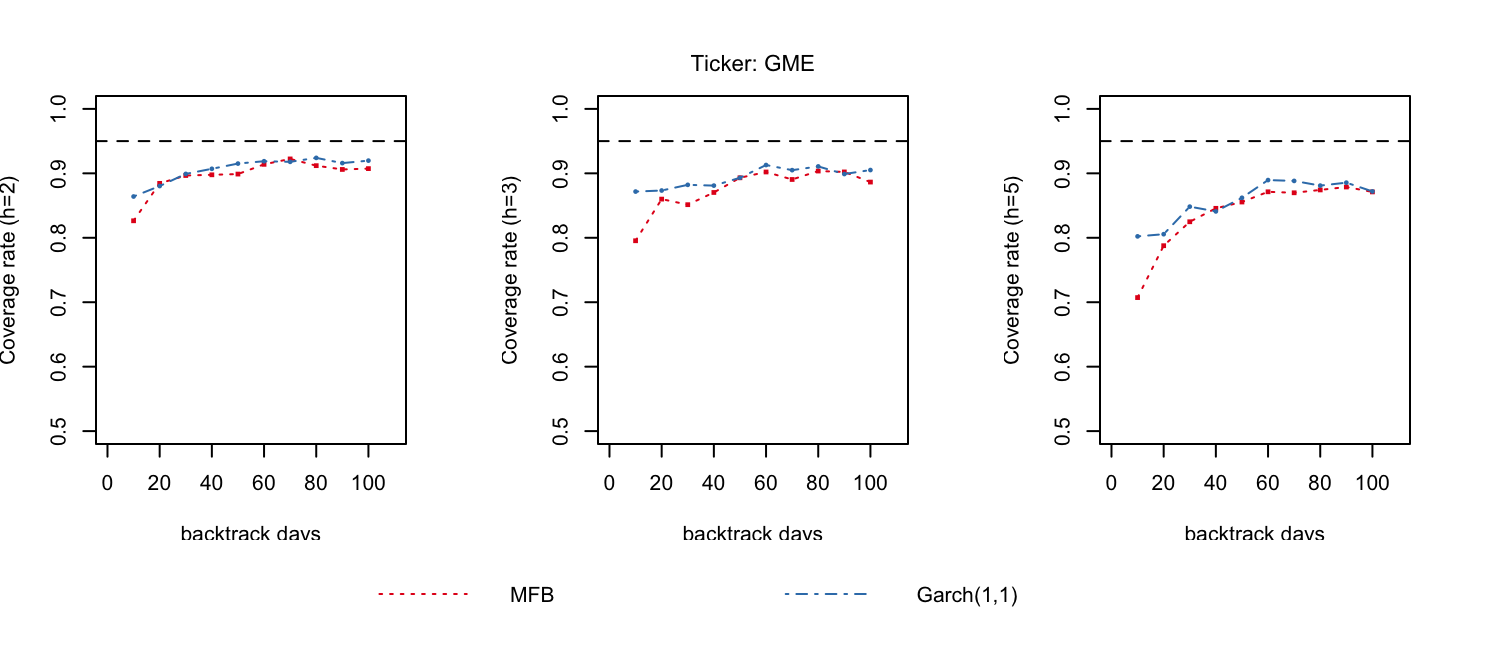}
	\caption{ECVR plots for GME}
	\label{fig:ecvr_GME}
\end{figure} 
\begin{figure}[h]
\centering
\includegraphics[scale = 0.25]{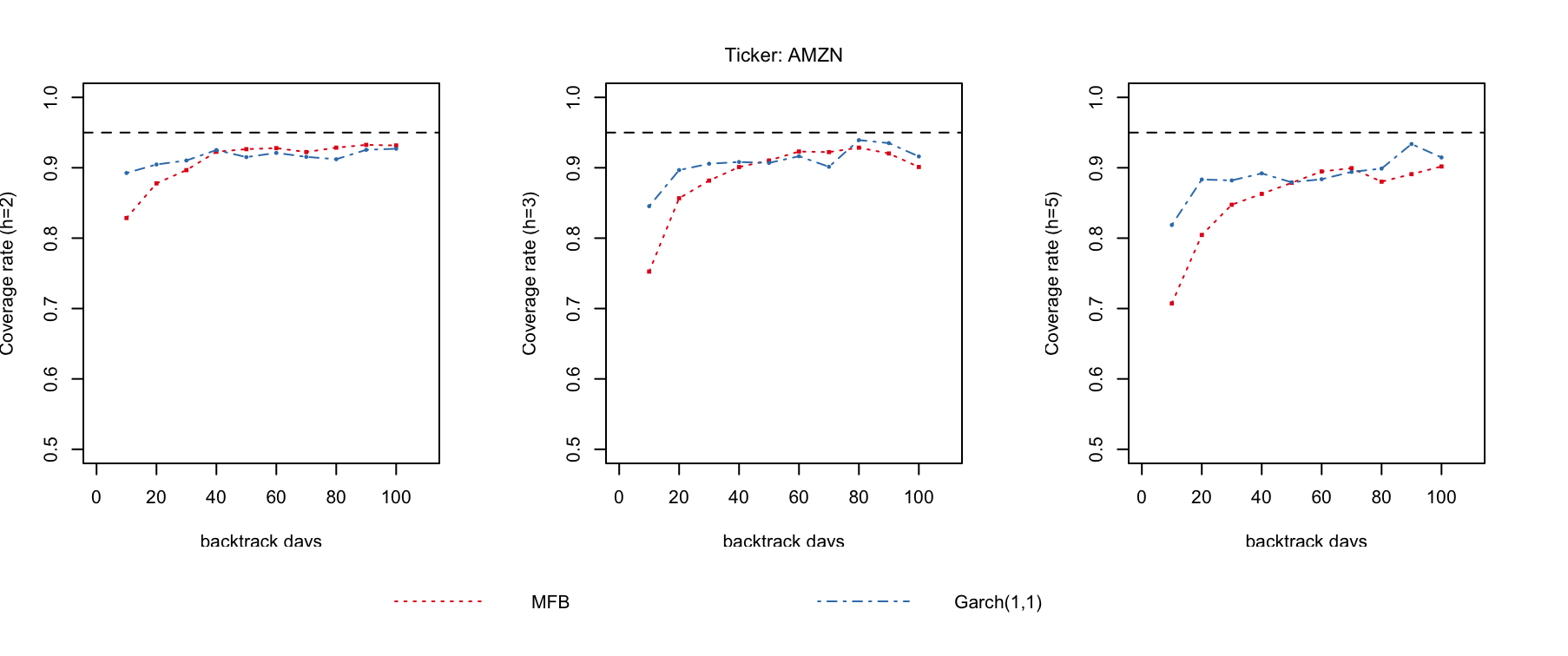}
\caption{ECVR plots for AMZN.}
\label{fig:ecvr_AMZN}
\end{figure} 
\begin{figure}[h]
\centering
\includegraphics[scale = 0.25]{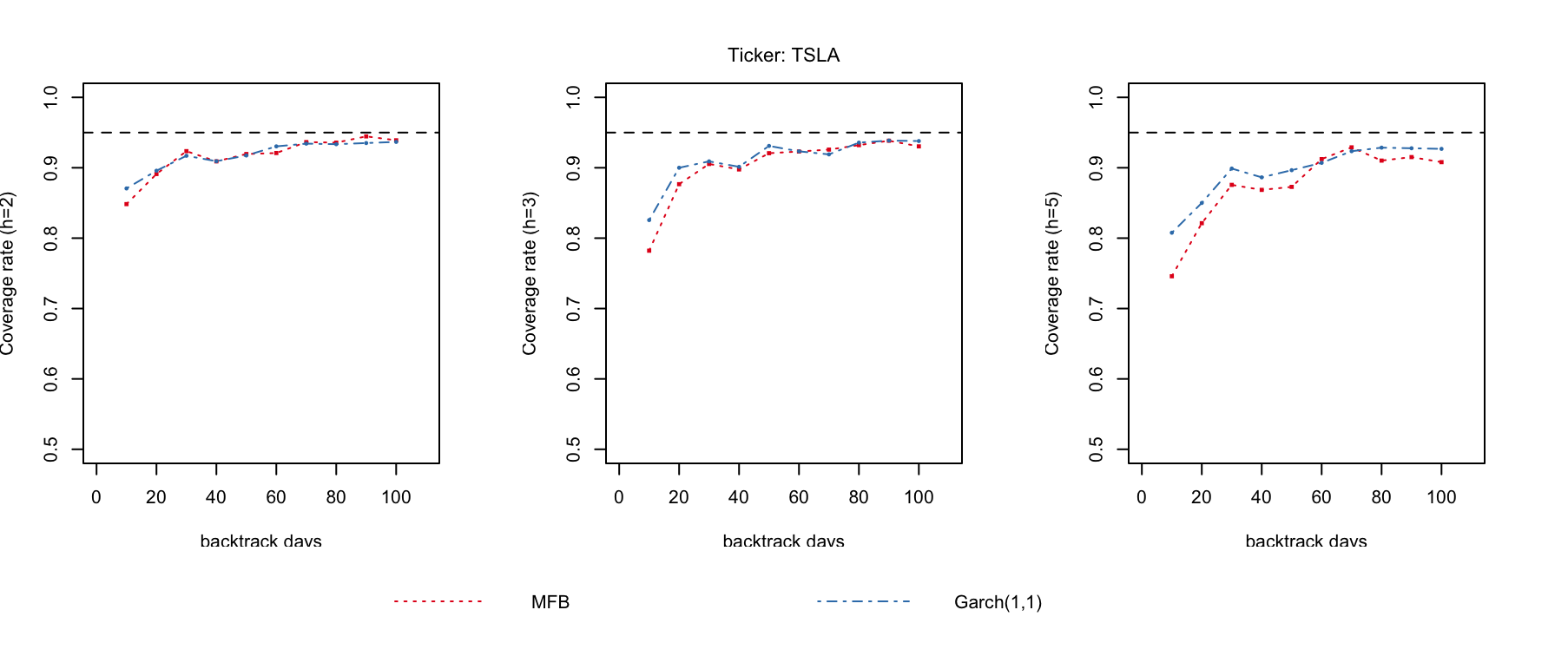}
\caption{ECVR plots for TSLA.}
\label{fig:ecvr_TSLA}
\end{figure} 

We can see that as the number of backtracking days increases, the empirical coverage also grows towards $95\%$ because of asymptotic validity. However, both methods still have under coverage issue, partially due to the fact that the stationarity assumption does not perfectly fit the stock market.
We also observe that for $h=2$ and $3$, the performance of the two methods are on par with each other. However, for $h=5$ the garch-based bootstrap has superior performance comparing with the MFB. This is due to the more efficient parameter searching scheme in garch fitting, whereas for the MFB, the process of parameter searching mostly relies on cross validation, which can be inefficient and suboptimal.

\bibliography{reference}
\end{document}